\documentclass[conference]{IEEEtran}
\IEEEoverridecommandlockouts
\usepackage{optidef}
\usepackage{amsmath,amsfonts}
\usepackage{algorithm}
\usepackage[inline]{enumitem}
\usepackage{array}
\usepackage[caption=false,font=normalsize,labelfont=sf,textfont=sf]{subfig}
\usepackage{textcomp}
\usepackage{stfloats}
\usepackage{url}
\usepackage{verbatim}
\usepackage{amsmath,amssymb,amsfonts}
\usepackage{mathtools, physics, bbm}
\usepackage{amsthm}
\usepackage{cite}
\newtheorem{theorem}{Theorem}
\usepackage{graphicx}
\usepackage{xcolor}

\newtheorem{proposition}{Proposition}
\newtheorem{definition}{Definition}

\newtheorem{corollary}{Corollary}
\newtheorem{lemma}{Lemma}

\newtheorem{example}{Example}

\usepackage{graphicx}

\theoremstyle{remark}
\newtheorem{remark}{Remark}

\theoremstyle{plain}
\usepackage{mathrsfs}
\usepackage{xspace}

\usepackage{cite}
\usepackage{amsmath,amssymb,amsfonts}
\usepackage{algorithmic}
\usepackage{graphicx}
\usepackage{textcomp}
\usepackage{xcolor}
\newcommand{\cC}{\mathcal{C}}

\definecolor{bleudefrance}{rgb}{0.19, 0.55, 0.91}

\def\BibTeX{{\rm B\kern-.05em{\sc i\kern-.025em b}\kern-.08em
    T\kern-.1667em\lower.7ex\hbox{E}\kern-.125emX}}

\newcommand{\supp}{\mathrm{supp}}
\newcommand{\wt}{\mathrm{wt}}

\newcommand{\rtxc}{\ensuremath{(r,t,x)}-cLRC\xspace}
\newcommand{\rtxcs}{\ensuremath{(r,t,x)}-cLRCs\xspace}
\newcommand{\rtxq}{\ensuremath{(r,t,x)}-qLRC\xspace}
\newcommand{\rtxqs}{\ensuremath{(r,t,x)}-qLRCs\xspace}
\begin{document}

\title{CSS Quantum LRCs with Intersecting Recovery Sets: Constructions and Bounds

\thanks{H.\ B. is also affiliated with the Munich Quantum Valley (MQV), the Munich Center for Quantum Science and Technology (MCQST)}
}

\author{%
  \IEEEauthorblockN{Evagoras Stylianou$^1$, Vinayak Ramkumar$^2$, Holger Boche$^{1}$ and Rawad Bitar$^2$}
  \IEEEauthorblockA{$^1$Chair of Theoretical Information Technology, Technical University of Munich, \\  
   $^2$Institute for Communications Engineering, Technical University of Munich, \\
  Email: \{evagoras.stylianou, vinayak.ramkumar, boche, rawad.bitar\}@tum.de \vspace{-0.68cm}
}
}

\maketitle

\begin{abstract}
In this work, we study \((r,t,x)\) quantum locally recoverable codes (qLRCs) with locality \(r\), \(t\) recovery sets per qudit, and intersection parameter \(x\). We first show that, assuming the underlying classical codes have dual minimum distance at least two, a CSS code is an \((r,t,x)\)-qLRC if and only if the underlying classical codes are \((r,t,x)\) classical LRCs (cLRCs) with common recovery sets. We then use subset-inclusion matrices to construct families of binary dual-containing \((r,t,x)\)-cLRCs, which yield binary \((r,t,x)\)-qLRCs via the CSS construction. For CSS \((r,t,x)\)-qLRCs, we derive upper bounds on the dimension and rate, minimum-distance bounds in the pure case, and a Singleton-like dimension bound in the exact case. Finally, we show that these families attain high rates and nontrivial minimum distances.\vspace{-0.2cm}
\end{abstract}

\section{Introduction}
Quantum locally recoverable codes (qLRCs) provide a
framework for quantum error correction under locality constraints: the erasure of any single qudit can be corrected by accessing at most $r$ other qudits. We refer to these qudits together with the erased qudit as a \emph{recovery set}. Since the pioneering work of Golowich and Guruswami~\cite{golowich2023quantum}, several bounds, constructions, and generalizations for qLRCs have been developed~\cite{sharma2025quantum,luo2025bounds,galindo2026quantum,zhou2025optimal,xie2025two,li2025improved,cao2025optimal,bu2025quantum}.
For classical LRCs (cLRCs), an important strengthening of locality is availability, where each symbol has multiple recovery sets that intersect only in the symbol itself~\cite{rawat2016locality}. Throughout, we refer to such recovery sets as \emph{disjoint}. In the quantum setting, however, already two disjoint recovery sets for the same qudit force it to be in a fixed product state with the rest of the code, and hence to carry no quantum information about the encoded state~\cite{golowich2023quantum}. Thus, a direct quantum analog of classical availability is impossible.

Motivated by this limitation, Bu--Gu--Li~\cite{bu2025quantum} introduced $(r,t,x)$-qLRCs, the quantum analogue of $(r,t,x)$-cLRCs~\cite{kruglik2017one}, by allowing multiple \emph{intersecting} recovery sets. Here, each qudit has $t$ recovery sets with pairwise intersection size at most $x+1$. By a slight abuse of terminology, we refer to $t$ as the availability. They also defined \emph{exact} $(r,t,x)$-qLRCs, for which all recovery sets have size $r+1$, all pairwise intersections have size $x+1$, and any three intersect only at the erased qudit. They derived a dimension bound for general $(r,t,x)$-qLRCs, a Singleton-like dimension bound for the exact case, and constructed an exact family via the CSS framework.


In the classical setting, the literature on $(r,t,x)$-cLRCs is less developed
than in the disjoint case $(x=0)$. In~\cite{kruglik2017one,kruglik2018distance,
kruglik2019new}, it was observed that allowing intersections between recovery
sets can improve the code rate relative to the disjoint setting. Using the
recovery graph framework~\cite{tamo2016bounds}, the authors
of~\cite{kruglik2017one} derived a rate bound for \rtxcs and gave constructions
by repeating columns of the parity-check matrix of the Wang--Zhang--Liu (WZL)
construction of $(r,t,0)$-cLRCs~\cite{wang2015achieving}. Alphabet-dependent
distance bounds were later obtained in~\cite{kruglik2018distance}. Further
constructions, based on parity-check matrices from subspace inclusion relations
and on extensions of the Tamo--Barg construction, were given
in~\cite{teng2021constructions,wang2016two,rajput2020rs,rajput2022subclass}.


In this work, we study $(r,t,x)$-qLRCs obtained via the CSS construction. Our
main contributions are as follows. 
\begin{enumerate}[label=\emph{(\roman*)}]
\itemsep0.1em
\item We show that, under the assumption that the underlying classical codes have dual minimum distance at least two, a CSS code is an $(r,t,x)$-qLRC iff the underlying classical codes are $(r,t,x)$-cLRCs with common recovery~sets.
\item We show that a family of binary codes defined by subset-inclusion matrices gives $(r,t,x)$-cLRCs with explicit locality, availability, and intersection parameter. We then characterize exactly when these codes are dual-containing, thereby obtaining explicit families of binary CSS $(r,t,x)$-qLRCs.
\item For CSS \rtxqs, we derive dimension and rate bounds, minimum-distance bounds in the pure case, and a Singleton-like dimension bound in the exact case.
\end{enumerate}
Finally, we compare an exact subfamily of the subset-inclusion construction with the construction of Bu--Gu--Li~\cite{bu2025quantum}, which is, to the best of our knowledge, the only previously known explicit \rtxq construction. \vspace{-0.15cm}

\section{Preliminaries} \label{sec:pre}
Let $q$ be a prime power, and use the convention that $\binom{a}{b}=0$ whenever $b<0$ or $b>a$. For a positive integer $n$, set $[n]:=\{1,\dots,n\}$. An $[n,k,d]_q$ code is a linear code $\cC\subseteq\mathbb F_q^n$ of length $n$, dimension $k$, and minimum distance $d$. The Euclidean dual of $\cC$ is $\cC^\perp:=\{u\in\mathbb F_q^n:\langle u,c\rangle=0,\;\forall c\in\cC\}$, where $\langle\cdot,\cdot\rangle$ denotes the Euclidean inner product. We write $d(\cC)$ for the minimum distance of $\cC$ and $\wt(\cC)$ for the minimum Hamming weight of a nonzero codeword in $\cC$. For $I\subseteq[n]$, let $\pi_I:\mathbb F_q^n\to\mathbb F_q^{|I|}$ denote the projection onto the coordinates in $I$. The shortened code of $\cC$ on $I$ is $\sigma_I(\cC):=\{\pi_I(c):c\in\cC,\ \supp(c)\subseteq I\}$, where $\supp(c):=\{i\in[n]:c_i\neq0\}$. A $q$-ary quantum code of length $n$ is a subspace $\mathcal Q\subseteq(\mathbb C^q)^{\otimes n}$. If $\dim(\mathcal Q)=q^k$ and every erasure of at most $\delta-1$ qudits is correctable, then $\mathcal Q$ is called an $[[n,k,\delta]]_q$ quantum code. For a state $\rho$, let $\Tr_i(\rho)$ denote the partial trace over the $i$th qudit.\vspace{-0.15cm}

\subsection{Classical LRCs with intersecting recovery sets} \vspace{-0.05cm}

We begin by defining classical codes with locality.\vspace{-0.05cm}

\begin{definition}[cLRCs \cite{gopalan2012locality}]\label{def:cLRCs}
Let $\cC \subseteq \mathbb{F}_q^n$ be a linear code. For $i \in [n]$ and
$\beta \in \mathbb{F}_q$, define $\cC(i,\beta):=\{c\in\cC:\ c_i=\beta\}$. We say that $\cC$ has \emph{locality} $r$ if for every $i\in[n]$ there exists a
set $R_i\subseteq[n]$ with $i\in R_i$ and $|R_i|\le r+1$ such that,
\begin{align}
\hspace{-0.15cm}\pi_{R_i\setminus\{i\}}\big(\cC(i,\beta)\big)
\cap
\pi_{R_i\setminus\{i\}}\big(\cC(i,\beta')\big)
= \emptyset,\quad \forall \beta\neq\beta'.
\label{eq:LRCcond}
\end{align}
We refer to $R_i$ as a \emph{recovery set} for $i$.
\end{definition}

For linear codes, condition~\eqref{eq:LRCcond} is equivalent to the nonexistence of a codeword $c\in\cC$ such that $c_i\neq 0$ and $c_{R_i\setminus\{i\}}=0$, equivalently, $\sigma_{\{i\}}\big(\pi_{R_i}(\cC)\big)=\{0\}$. This characterization of classical locality is used in~\cite{galindo2026quantum}. Next, we recall cLRCs with multiple recovery sets and bounded pairwise
intersections.

\begin{definition}[$(r,t,x)$-cLRC \cite{kruglik2017one}]\label{def:cLRCs_tx}
A linear code $\cC \subseteq \mathbb F_q^n$ is called an $(r,t,x)$-cLRC if for every $i\in [n]$ there exist $t$ distinct subsets $\{R_i^{(j)}\}_{j=1}^t$ such that, for all $j\in[t]$, $i\in R_i^{(j)}$, $|R_i^{(j)}|\le r+1$, and the following conditions hold:
\begin{enumerate}
    \item For every pair of distinct indices $j,\ell\in[t]$,\vspace{-0.15cm}
    \begin{align}
        |R_i^{(j)} \cap R_i^{(\ell)}|\le x+1.
        \label{eq:rec_cond}
    \end{align}
    \item For every $j\in[t]$,\vspace{-0.15cm}
    \begin{align*}
    \sigma_{\{i\}}\big(\pi_{R_i^{(j)}}(\cC)\big)=\{0\}.
    \end{align*}
\end{enumerate}
We refer to $r$ as the \emph{locality}, $t$ as the \emph{availability}, by a slight abuse of
terminology, and $x$ as the \emph{intersection parameter}.\vspace{-0.15cm}
\end{definition}



\subsection{Quantum LRCs with intersecting recovery sets}


Quantum LRCs were introduced in~\cite{golowich2023quantum}, where \emph{disjoint} recovery sets ($x=0$) were shown to be impossible. Bu--Gu--Li~\cite{bu2025quantum} therefore introduced qLRCs with multiple \emph{intersecting} recovery sets, defined as follows.
\begin{definition}[$(r,t,x)$-qLRC \cite{bu2025quantum}]\label{def:qLRCs}
An $[[n,k,\delta]]_q$ quantum code $\mathcal Q$ is called an $(r,t,x)$-qLRC if, for every $i\in[n]$, there exist $t$ distinct subsets $\{R_i^{(j)}\}_{j=1}^t$ such that, for all $j\in[t]$, $i\in R_i^{(j)}$ with $|R_i^{(j)}|\le r+1$, and the following conditions hold:
\begin{enumerate}
    \item For every $j\in[t]$, there exists a recovery channel $\mathrm{Rec}_i^{(j)}$ acting on the qudits in $R_i^{(j)}\setminus\{i\}$ such that, for every state $\rho$ supported on $\mathcal Q$,
    \begin{align*}
    \bigl(\mathrm{Rec}_i^{(j)}\otimes
    \mathrm{id}_{[n]\setminus R_i^{(j)}}\bigr)\bigl(\Tr_i(\rho)\bigr)=\rho.
    \end{align*}
    \item For every pair of distinct indices $j,\ell\in[t]$,
    \begin{align*}
    |R_i^{(j)}\cap R_i^{(\ell)}|\le x+1.
    \end{align*}
\end{enumerate}
\end{definition}

In this paper, we restrict our attention to \rtxqs arising from the CSS construction~\cite{calderbank1998quantum}, which produces stabilizer codes~\cite{gottesman1997stabilizer} from suitable pairs of classical linear codes.

\begin{proposition}[CSS construction~\cite{calderbank1998quantum}]\label{prop:CSS}
Let $\cC_i$ be an $[n,k_i,d_i]_q$ code for $i=1,2$ satisfying $\cC_1^\perp\subseteq \cC_2$. Then, there exists a quantum code $\mathcal Q$ with parameters $[[n,\kappa,\delta]]_q$, where $\kappa=k_1+k_2-n$ and $\delta=\min\bigl\{\wt(\cC_2\setminus \cC_1^\perp),\,\wt(\cC_1\setminus \cC_2^\perp)\bigr\}$.\vspace{-0.08cm}
\end{proposition}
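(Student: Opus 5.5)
The plan is to build the standard CSS stabilizer code and read off its parameters from the classical codes. First fix an orthonormal basis $\{|a\rangle: a\in\mathbb F_q\}$ of $\mathbb C^q$. For each coset $u+\cC_1^\perp\in\cC_2/\cC_1^\perp$ we form the state
\begin{align*}
|u+\cC_1^\perp\rangle := |\cC_1^\perp|^{-1/2}\sum_{w\in\cC_1^\perp}|u+w\rangle .
\end{align*}
States from distinct cosets have disjoint supports in the computational basis, so they are orthonormal. Define $\mathcal Q$ as their span. Its dimension is $|\cC_2|/|\cC_1^\perp|=q^{k_2-(n-k_1)}=q^{k_1+k_2-n}$, which gives $\kappa$.

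Equivalently, $\mathcal Q$ is the joint $+1$ eigenspace of the abelian stabilizer group $S=\{X(a)Z(b): a\in\cC_1^\perp,\ b\in\cC_2^\perp\}$. Here $X(a)$ and $Z(b)$ are the generalized Pauli operators (shift, and phase $\omega^{\mathrm{tr}(b\cdot v)}$). These operators commute because $\cC_1^\perp\subseteq\cC_2$ implies $\langle a,b\rangle=0$ for $a\in\cC_1^\perp$ and $b\in\cC_2^\perp$. Checking that the span above coincides with this eigenspace is a direct computation: $X(a)$ with $a\in\cC_1^\perp$ permutes the terms of each coset sum, and $Z(b)$ with $b\in\cC_2^\perp$ acts trivially on vectors of $\cC_2$.

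For the distance, use the Knill--Laflamme criterion in stabilizer form. An erasure on a set $E$ is correctable iff every Pauli $P=X(a)Z(b)$ supported on $E$ that commutes with all of $S$ already lies in $S$ (up to phase). The centralizer condition reads $\langle a,b'\rangle=0$ for all $b'\in\cC_2^\perp$ and $\langle b,a'\rangle=0$ for all $a'\in\cC_1^\perp$. This is equivalent to $a\in\cC_2$ and $b\in\cC_1$. Membership in $S$ means $a\in\cC_1^\perp$ and $b\in\cC_2^\perp$.

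An uncorrectable logical error therefore needs $a\in\cC_2\setminus\cC_1^\perp$ or $b\in\cC_1\setminus\cC_2^\perp$. Since $\supp(P)=\supp(a)\cup\supp(b)$, its weight is at least $\wt(a)$ or $\wt(b)$ respectively. Hence all erasures of size less than $\delta=\min\{\wt(\cC_2\setminus\cC_1^\perp),\wt(\cC_1\setminus\cC_2^\perp)\}$ are correctable.

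The main obstacle is justifying the stabilizer form of the erasure criterion: the fact that erasures on $E$ are correctable iff no nontrivial logical operator is supported on $E$. I would derive it from Knill--Laflamme, $\Pi E_1^\dagger E_2\Pi=c(E_1^\dagger E_2)\Pi$. The argument uses that the Paulis supported on $E$ span all operators on $E$. It then remains to show, for a Pauli $P$ and code projector $\Pi$, that $\Pi P\Pi$ is $0$ if $P$ anticommutes (up to a phase) with some element of $S$, is a scalar multiple of $\Pi$ if $P\in S$ up to phase, and acts nontrivially otherwise. Since this is the standard CSS result of \cite{calderbank1998quantum}, in the paper I would cite it for this step rather than reprove it in full.
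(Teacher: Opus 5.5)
Your argument is correct and is the standard stabilizer proof of the CSS construction. The coset states span the joint $+1$ eigenspace of $S=\{X(a)Z(b): a\in\cC_1^\perp,\ b\in\cC_2^\perp\}$, the centralizer is $\{X(a)Z(b): a\in\cC_2,\ b\in\cC_1\}$, and every nontrivial logical operator then has weight at least $\delta$, which is exactly what the paper's definition of an $[[n,\kappa,\delta]]_q$ code requires. The paper gives no proof of this proposition and simply imports it from \cite{calderbank1998quantum}, the same source you propose to cite for the Knill--Laflamme step, so your route matches what the paper relies on.
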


The following corollary is for the special case $\cC_1=\cC_2=\cC$. A linear code $\cC$ satisfying $\cC^\perp\subseteq\cC$ is called dual-containing.\vspace{-0.08cm}

\begin{corollary}[Dual-containing CSS construction~\cite{calderbank1998quantum}]\label{cor:css_selforth}
Let $\cC$ be an $[n,k,d]_q$ dual-containing linear code. Then the CSS construction produces a quantum code $[[n,\kappa,\delta]]_q$ with $\kappa=2k-n$ and $\delta=\wt(\cC\setminus\cC^\perp)\ge d$.
\end{corollary}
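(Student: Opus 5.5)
The plan is to obtain Corollary~\ref{cor:css_selforth} as a direct specialization of Proposition~\ref{prop:CSS}, taking $\cC_1=\cC_2=\cC$. First I check the hypothesis of Proposition~\ref{prop:CSS}. It requires $\cC_1^\perp\subseteq\cC_2$, which here reads $\cC^\perp\subseteq\cC$. This is exactly the dual-containing assumption, so the proposition applies and yields an $[[n,\kappa,\delta]]_q$ quantum code.

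Next I read off the parameters. With $k_1=k_2=k$, the dimension formula gives $\kappa=k_1+k_2-n=2k-n$. The distance formula has two terms, $\wt(\cC_2\setminus\cC_1^\perp)$ and $\wt(\cC_1\setminus\cC_2^\perp)$. Both become $\wt(\cC\setminus\cC^\perp)$, so their minimum is $\delta=\wt(\cC\setminus\cC^\perp)$. Note also that $\cC^\perp\subseteq\cC$ forces $n-k\le k$, so $\kappa\ge 0$.

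For the inequality $\delta\ge d$, I use that $0\in\cC^\perp$. Hence $\cC\setminus\cC^\perp$ contains only nonzero codewords of $\cC$. The minimum weight over this subset is therefore at least the minimum weight over all nonzero codewords of $\cC$. For a linear code that minimum weight equals $d$, so $\delta\ge d$.

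The only delicate point is the degenerate case $\cC=\cC^\perp$. Then $\cC\setminus\cC^\perp=\emptyset$ and $\kappa=0$. I will handle it by the usual convention that the minimum over an empty set is $+\infty$, or equivalently that a zero-dimensional code is assigned distance $n$ or treated as trivially correcting. Under either convention the bound $\delta\ge d$ holds, so this case causes no real obstacle.
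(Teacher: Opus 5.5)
Your proposal is correct and matches the paper, which presents the corollary simply as the special case $\cC_1=\cC_2=\cC$ of Proposition~\ref{prop:CSS}. Your added observation that $0\in\cC^\perp$ gives $\cC\setminus\cC^\perp\subseteq\cC\setminus\{0\}$ is exactly what justifies $\delta\ge d$. Your treatment of the degenerate case $\cC=\cC^\perp$ via the empty-minimum convention is also fine.
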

We write $\mathcal Q=\mathrm{CSS}(\cC_1,\cC_2)$ for the code in Proposition~\ref{prop:CSS}, and $\mathcal Q=\mathrm{CSS}(\cC)$ in the dual-containing case of Corollary~\ref{cor:css_selforth}. We call $\mathcal Q=\mathrm{CSS}(\cC)$ \emph{pure} if $\delta = \wt(\cC\setminus\cC^\perp)=d(\cC)$.\vspace{-0.15cm}

\section{\rtxqs from the CSS construction} \label{sec:css_constr}

In this section, we apply the CSS local recovery criterion of Galindo \emph{et al.}~\cite{galindo2026quantum} to the setting of $(r,t,x)$-qLRCs and identify the recovery set-condition needed to preserve the intersection parameter.
We first state the resulting criterion for a CSS code $\mathcal Q=\mathrm{CSS}(\cC_1,\cC_2)$ to be an \rtxq. \begin{lemma}\label{lem:css_constr} Let $\cC_1,\cC_2\subseteq \mathbb{F}_q^n$ be linear codes satisfying $\cC_1^\perp\subseteq \cC_2$, and let $\mathcal Q=\mathrm{CSS}(\cC_1,\cC_2)$. Then $\mathcal Q$ is an $(r,t,x)$-qLRC if and only if for every $i\in[n]$ there exist $t$ distinct subsets $\{R_i^{(j)}\}_{j=1}^t$ such that, for all $j, \ell\in[t]$, $i\in R_i^{(j)}$ and $|R_i^{(j)}|\le r+1$, 
\begin{align*}
|R_i^{(j)}\cap R_i^{(\ell)}|\le x+1 \qquad \text{for all } j\neq \ell,
\end{align*}
and, for all $j\in[t]$,\vspace{-0.15cm}
\begin{align*}
\sigma_{\{i\}}\!\big(\pi_{R_i^{(j)}}(\cC_1)\big)&=\sigma_{\{i\}}(\cC_2^\perp),\\
\sigma_{\{i\}}\!\big(\pi_{R_i^{(j)}}(\cC_2)\big)&=\sigma_{\{i\}}(\cC_1^\perp).
\end{align*}
\end{lemma}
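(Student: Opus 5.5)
The plan is to reduce the lemma to the single-recovery-set CSS criterion of Galindo \emph{et al.}~\cite{galindo2026quantum}. The key observation is that Definition~\ref{def:qLRCs} imposes two kinds of conditions. Condition~1 concerns each pair $(i,R_i^{(j)})$ separately: qudit $i$ must be recoverable from $R_i^{(j)}\setminus\{i\}$. Condition~2, together with the requirements $i\in R_i^{(j)}$, $|R_i^{(j)}|\le r+1$ and distinctness, is purely combinatorial and involves only the sets, not the code. So I will first isolate a per-set statement: for a fixed $i$ and a fixed $R\ni i$, there is a recovery channel on $R\setminus\{i\}$ that recovers $\Tr_i(\rho)$ for all $\rho$ supported on $\mathrm{CSS}(\cC_1,\cC_2)$ if and only if
\[
\sigma_{\{i\}}\big(\pi_R(\cC_1)\big)=\sigma_{\{i\}}(\cC_2^\perp)
\quad\text{and}\quad
\sigma_{\{i\}}\big(\pi_R(\cC_2)\big)=\sigma_{\{i\}}(\cC_1^\perp).
\]
This is exactly the CSS local recovery criterion of~\cite{galindo2026quantum}, and I will cite it. If a self-contained argument is wanted, I would derive it from the Knill--Laflamme/cleaning viewpoint. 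Recovery of qudit $i$ from $R\setminus\{i\}$ is equivalent to correctability of the erasure of $([n]\setminus R)\cup\{i\}$ in the restricted sense: every logical operator supported on $R$ must be equivalent to one avoiding $i$. Splitting into $X$-type and $Z$-type parts, this becomes the stated statement about codewords of $\pi_R(\cC_1)$ and $\pi_R(\cC_2)$ supported on $\{i\}$.

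With the per-set lemma available, the proof is a direct two-way translation. For ($\Rightarrow$), take the $t$ sets $R_i^{(j)}$ supplied by Definition~\ref{def:qLRCs}. Their sizes, distinctness, membership of $i$ and pairwise intersection bound carry over verbatim. The recovery channel for each $j$ yields the two shortened-code equalities by the per-set lemma. For ($\Leftarrow$), take the given sets and apply the per-set lemma in the other direction to each $R_i^{(j)}$ to obtain channels $\mathrm{Rec}_i^{(j)}$. Condition~2 of Definition~\ref{def:qLRCs} is then the assumed intersection bound.

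The main obstacle lies in invoking the per-set criterion with exactly matching conventions. I need to check that the shortening $\sigma_{\{i\}}$ applied after the projection $\pi_R$ is the object used in~\cite{galindo2026quantum}. Since $\sigma_{\{i\}}(\cC_2^\perp)$ lives in $\mathbb F_q$, each equality just says that a nonzero value at $i$ occurs in a codeword of $\pi_R(\cC_1)$ vanishing off $i$ iff it occurs in $\cC_2^\perp$ restricted to weight at $i$ only. I would also handle the degenerate case where $\sigma_{\{i\}}(\cC_2^\perp)\neq\{0\}$: the qudit is then in a fixed product state, and the criterion still applies uniformly. Beyond this convention check, the argument is bookkeeping, since the intersection parameter is independent of the code.
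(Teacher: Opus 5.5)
Your proposal is correct and follows the paper's proof: apply the CSS local-recovery criterion of~\cite{galindo2026quantum} (Proposition~26 there) with $I=\{i\}$ and $J=R_i^{(j)}$ for every $i\in[n]$, $j\in[t]$, and carry the distinctness, size and intersection requirements of Definition~\ref{def:qLRCs} over unchanged. One caution on your optional self-contained sketch: the criterion concerns codewords of $\cC_1,\cC_2$ vanishing on $R\setminus\{i\}$ (operators supported on $([n]\setminus R)\cup\{i\}$, whose action at $i$ must be that of a stabilizer supported on $\{i\}$), not logical operators supported on $R$ --- your stated condition holds vacuously whenever $|R|<\delta$ (e.g., two qubits of the Steane code) although local recovery then fails, so keep relying on the citation as the paper does.
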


\begin{proof}
By~\cite[Proposition~26]{galindo2026quantum}, a CSS code can recover the erasure of $I=\{i\}$ using the qudits in $J\setminus I$, where $I\subseteq J\subseteq [n]$, if and only if $\sigma_I\!\big(\pi_J(\cC_1)\big)=\sigma_I(\cC_2^\perp)$ and $\sigma_I\!\big(\pi_J(\cC_2)\big)=\sigma_I(\cC_1^\perp)$. Applying this criterion with $J=R_i^{(j)}$, for every $i\in[n]$ and $j\in[t]$, and including the size and intersection requirements in Definition~\ref{def:qLRCs}, gives the claim.\vspace{-0.15cm}
\end{proof}

We say that $\cC_1$ and $\cC_2$ have \emph{common recovery sets} if, for every $i\in[n]$, the same sets $\{R_i^{(j)}\}_{j=1}^t$ satisfy the recovery set conditions of Definition~\ref{def:cLRCs_tx} for both $\cC_1$ and $\cC_2$.  Under the additional assumption $d(\cC_i^\perp)\ge 2$ for $i=1,2$, the conditions in Lemma~\ref{lem:css_constr} reduce exactly to this common recovery set condition for the classical codes.\vspace{-0.15cm}
 
\begin{theorem}\label{thm:qrtx}
Let $\cC_\ell$ be an $[n,k_\ell,d_\ell]_q$ code with $d(\cC_\ell^\perp)\ge 2$ for $\ell=1,2$, and suppose that $\cC_1^\perp\subseteq \cC_2$. Then $\mathcal Q=\mathrm{CSS}(\cC_1,\cC_2)$ is an $(r,t,x)$-qLRC if and only if $\cC_1$ and $\cC_2$ are $(r,t,x)$-cLRCs with common recovery sets.\vspace{-0.1cm}
\end{theorem}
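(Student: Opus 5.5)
The plan is to reduce the theorem to Lemma~\ref{lem:css_constr}. That lemma already packages the size and intersection requirements, so both directions come down to a statement about a single coordinate. Fix $i\in[n]$ and a candidate recovery set $R$ with $i\in R$. It suffices to prove that, under $d(\cC_1^\perp),d(\cC_2^\perp)\ge 2$,
\[
\sigma_{\{i\}}\big(\pi_R(\cC_1)\big)=\sigma_{\{i\}}(\cC_2^\perp)\iff \sigma_{\{i\}}\big(\pi_R(\cC_1)\big)=\{0\},
\]
together with the symmetric statement obtained by swapping $\cC_1$ and $\cC_2$. Once this holds, the conditions of Lemma~\ref{lem:css_constr} for the family $\{R_i^{(j)}\}_{j=1}^t$ coincide with condition 2 of Definition~\ref{def:cLRCs_tx} for both $\cC_1$ and $\cC_2$ with the same sets. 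Condition 1 and the size constraints are identical on both sides, so the equivalence follows.

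The key observation is that the right-hand sides of Lemma~\ref{lem:css_constr} vanish. By definition, $\sigma_{\{i\}}(\cC_2^\perp)=\{\pi_{\{i\}}(c): c\in\cC_2^\perp,\ \supp(c)\subseteq\{i\}\}$. A nonzero element here would come from a codeword of $\cC_2^\perp$ of weight exactly one. This is excluded by $d(\cC_2^\perp)\ge 2$, so $\sigma_{\{i\}}(\cC_2^\perp)=\{0\}$. In the same way, $d(\cC_1^\perp)\ge 2$ gives $\sigma_{\{i\}}(\cC_1^\perp)=\{0\}$.

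With this, the two conditions of Lemma~\ref{lem:css_constr} read $\sigma_{\{i\}}(\pi_{R_i^{(j)}}(\cC_1))=\{0\}$ and $\sigma_{\{i\}}(\pi_{R_i^{(j)}}(\cC_2))=\{0\}$. These are exactly condition 2 of Definition~\ref{def:cLRCs_tx} for $\cC_1$ and for $\cC_2$, respectively, with the same sets $R_i^{(j)}$.

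For the forward direction, suppose $\mathcal Q$ is an $(r,t,x)$-qLRC. Lemma~\ref{lem:css_constr} supplies sets satisfying both shortened-code equalities, and the reduction shows these are common recovery sets making $\cC_1$ and $\cC_2$ $(r,t,x)$-cLRCs. For the converse, take the common recovery sets. They satisfy the size and intersection bounds, and the two conditions hold because both right-hand sides are $\{0\}$. Lemma~\ref{lem:css_constr} then shows that $\mathcal Q$ is an $(r,t,x)$-qLRC.

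There is no substantial obstacle. The only point needing care is the identification of $\sigma_{\{i\}}(\cC^\perp)$ with the set of weight-one dual codewords supported on $i$. One must check that the hypothesis $d(\cC_\ell^\perp)\ge 2$ is applied to the correct code in each equality, since the $\cC_1$-condition involves $\cC_2^\perp$ and vice versa. The definition of common recovery sets also requires that the same $t$ sets serve both codes, which is exactly what Lemma~\ref{lem:css_constr} provides.
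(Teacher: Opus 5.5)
Your proposal is correct and follows essentially the same route as the paper. You use $d(\cC_\ell^\perp)\ge 2$ to get $\sigma_{\{i\}}(\cC_\ell^\perp)=\{0\}$, which collapses the two shortened-code equalities of Lemma~\ref{lem:css_constr} to condition~2 of Definition~\ref{def:cLRCs_tx} for $\cC_1$ and $\cC_2$ with the same sets, and both directions then follow from the lemma; your remark that a nonzero element of $\sigma_{\{i\}}(\cC_\ell^\perp)$ would be a weight-one dual codeword simply spells out the paper's one-line justification.
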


\begin{proof}
Since $d(\cC_\ell^\perp)\ge 2$, we have $\sigma_{\{i\}}(\cC_\ell^\perp)=\{0\}$ for every $i\in[n]$ and $\ell=1,2$. By Lemma~\ref{lem:css_constr}, the CSS recovery conditions therefore reduce to
\begin{align*}
\sigma_{\{i\}}\!\big(\pi_{R_i^{(j)}}(\cC_\ell)\big)=\{0\},
\qquad
\ell=1,2,\;\; j\in[t].
\end{align*}
Thus the sets $\{R_i^{(j)}\}_{j=1}^t$ appearing in Lemma~\ref{lem:css_constr} satisfy the classical recovery-set condition for both $\cC_1$ and $\cC_2$, i.e., they have common recovery sets. Conversely, suppose that $\cC_1$ and $\cC_2$ are \rtxcs with common recovery sets. Then these same sets satisfy the reduced CSS recovery conditions above. Together with the size and intersection requirements, Lemma~\ref{lem:css_constr} implies that $\mathcal Q$ is an $(r,t,x)$-qLRC. \vspace{-0.15cm}
\end{proof}

The common recovery set requirement in Theorem~\ref{thm:qrtx} is needed to preserve the intersection parameter $x$. If $\cC_1$ and $\cC_2$ use different recovery families, one may form CSS recovery sets by taking unions of the corresponding classical recovery sets. However, intersections between such unions also involve cross-intersections between recovery sets from $\cC_1$ and $\cC_2$, which are not controlled by the original $(r,t,x)$ conditions. \vspace{-0.25cm}

\section{Construction of CSS \rtxqs} \label{sec:wzl_gen}

In this section, we construct explicit binary dual-containing $(r,t,x)$-cLRCs, which then yield \rtxqs through the CSS construction. We use the subset-inclusion code family considered in~\cite{wilson1990diagonal,marin2026binary}, which contains the WZL construction~\cite{wang2015achieving} for $(r,t,0)$-cLRCs as a special case; hence we refer to it as the \emph{generalized WZL construction}. We show that this family yields $(r,t,x)$-cLRCs with explicit locality, availability, and intersection parameter. Since dual-containment is not automatic, we then identify the dual-containing subfamilies before applying the CSS construction.\vspace{-0.15cm}

\subsection{Dual-containing generalized WZL construction}

We now present the subset-inclusion code family \cite{marin2026binary} that we refer to as the \emph{generalized WZL construction}. Let $m,s,\alpha$ be integers such that $\alpha<s<m$. We define a binary matrix $H_{m,s,\alpha}$ over $\mathbb F_2$ with $\binom{m}{s-\alpha}$ rows and $\binom{m}{s}$ columns. The rows are indexed by the $(s-\alpha)$-subsets of $[m]$, and the columns are indexed by the $s$-subsets of $[m]$. For an $(s-\alpha)$-subset $E\subseteq[m]$ and an $s$-subset $F\subseteq[m]$, the entry in row $E$ and column $F$ is\vspace{-0.15cm}
\begin{align*}
h_{E,F}
=
\begin{dcases}
1 & \text{if } E\subseteq F,\\
0 & \text{otherwise}.
\end{dcases}
\end{align*}
Let $\cC_{m,s,\alpha}$ denote the binary linear code with parity-check matrix $H_{m,s,\alpha}$. This family of subset-inclusion codes was considered in~\cite{wilson1990diagonal,marin2026binary}. Apart from the special case $\alpha=1$, which recovers the WZL construction of binary $(r,t,0)$-cLRCs~\cite{wang2015achieving}, these codes have not, to the best of our knowledge, been studied from the viewpoint of \rtxcs. We show below that $\cC_{m,s,\alpha}$ is an \rtxc with explicitly computable locality, availability, and intersection parameter.

\begin{theorem}\label{thm:parameters}
The binary linear code $\cC_{m,s,\alpha}$ with parameters $[n,k,d]_2$ is an $(r,t,x)$-cLRC with $n = \binom{m}{s}$, $t = \binom{s}{\alpha}$, and  
\begin{align*}
r = \binom{m-s+\alpha}{\alpha}-1,\quad \;x = \binom{m-s+\alpha-1}{\alpha-1}-1.
\end{align*}
Moreover, if $m \ge 2s-\alpha+1$, then
\begin{align*}
s-\alpha+2 \le d \le 2^{\,s-\alpha+1},
\end{align*}
and if $m \ge 2s-\alpha$, then\vspace{-0.15cm}
\begin{align*}
k
=
\binom{m}{s}
-
\sum_{\substack{0\le i\le s-\alpha\\ \binom{s-i}{\alpha}\not\equiv 0 \!\!\!\pmod 2}}
\left(\binom{m}{i}-\binom{m}{i-1}\right).
\end{align*}
\end{theorem}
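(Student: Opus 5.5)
The plan splits into three parts: the locality parameters, the distance bounds, and the dimension.

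\emph{Locality parameters.} Fix a coordinate, i.e.\ an $s$-subset $F\subseteq[m]$. Every row $E$ of $H_{m,s,\alpha}$ with $E\subseteq F$ is a parity check whose support contains $F$. That support is the set of $s$-subsets $F'\supseteq E$. Since $F'$ is determined by choosing the remaining $\alpha$ elements outside $E$ from $[m]\setminus E$, the support has size $\binom{m-s+\alpha}{\alpha}$. I set $R_F^{(E)}:=\{F'\,:\,F'\supseteq E,\ |F'|=s\}$ for each of the $\binom{s}{\alpha}$ choices of $E\subseteq F$, which gives $t=\binom{s}{\alpha}$ and $r=\binom{m-s+\alpha}{\alpha}-1$. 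Recovery holds because a codeword $c$ with $c_F\neq0$ and $c$ vanishing on $R_F^{(E)}\setminus\{F\}$ would violate the check $E$; this is exactly condition 2 of Definition~\ref{def:cLRCs_tx}. For the intersection parameter, take two distinct $E_1,E_2\subseteq F$. Then $R^{(E_1)}\cap R^{(E_2)}$ consists of the $s$-sets containing $E_1\cup E_2$, and $|E_1\cup E_2|\ge s-\alpha+1$. The intersection is therefore largest when $|E_1\cup E_2|=s-\alpha+1$, where it equals $\binom{m-s+\alpha-1}{\alpha-1}$. This gives $x+1=\binom{m-s+\alpha-1}{\alpha-1}$. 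The sets $R_F^{(E)}$ are distinct since they determine $E$ as $\bigcap R_F^{(E)}$, provided $m>s$.

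\emph{Distance bounds.} For the lower bound $d\ge s-\alpha+2$, let $c$ be a nonzero codeword and view $\supp(c)$ as a family $\mathcal F$ of $s$-sets. Every $(s-\alpha)$-set $E$ is then contained in an even number of members of $\mathcal F$. I would argue by induction, or by a shadow/sunflower argument in the spirit of the WZL distance proof: pick $F_1\in\mathcal F$ and an element chain, and use the parity constraints to show that for each of $s-\alpha+1$ successive ``directions'' a new member must appear. A cleaner route is to restrict to $(s-\alpha)$-subsets $E$ of $F_1$. Fix $A\subseteq F_1$ with $|A|=s-\alpha-1$ and consider the $\alpha+1$ sets $E=A\cup\{y\}$ with $y\in F_1\setminus A$. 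Each such $E$ needs another member $F\neq F_1$ containing it, and a single $F\neq F_1$ can contain at most one of these, since containing two would force $|F\cap F_1|\ge s-\alpha+1$. Optimizing the choice of $A$ should yield at least $s-\alpha+1$ additional members. I expect this step to be the main obstacle; the condition $m\ge 2s-\alpha+1$ is presumably needed only for the upper bound.

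For the upper bound $d\le 2^{s-\alpha+1}$, I would exhibit an explicit codeword. Choose disjoint pairs $\{a_j,b_j\}$ for $j=1,\dots,s-\alpha+1$ and a fixed $(\alpha-1)$-set $B$ disjoint from all of them, which requires $2(s-\alpha+1)+\alpha-1\le m$, i.e.\ exactly $m\ge 2s-\alpha+1$. Take the $2^{s-\alpha+1}$ sets $B\cup\{z_1,\dots,z_{s-\alpha+1}\}$ with $z_j\in\{a_j,b_j\}$. To check that each $(s-\alpha)$-set $E$ lies in an even number of these, note that if $E$ lies in some member, it misses at least one pair entirely or hits a pair in both elements. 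In the first case the free choice in a missed pair pairs up members; in the second case there are zero members.

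\emph{Dimension.} Here $k=n-\mathrm{rank}_{\mathbb F_2}H_{m,s,\alpha}$, and I would invoke Wilson's diagonal form theorem~\cite{wilson1990diagonal}. For $m\ge 2s-\alpha$, i.e.\ $s-\alpha\le m-s$, it states that the inclusion matrix $W_{s-\alpha,s}$ has Smith/diagonal form with entries $\binom{s-i}{(s-\alpha)-i}=\binom{s-i}{\alpha}$, each with multiplicity $\binom{m}{i}-\binom{m}{i-1}$, for $0\le i\le s-\alpha$. The $\mathbb F_2$-rank is then the sum of the multiplicities over those $i$ with odd diagonal entry, which gives exactly the stated formula for $k$.
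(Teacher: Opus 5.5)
Your locality, availability and intersection computation matches the paper's, and so does your dimension step: you use Wilson's diagonal form for the inclusion matrix with $s-\alpha\le m-s$ and count the odd entries $\binom{s-i}{\alpha}$. For the distance, the paper only cites \cite[Corollary~1]{marin2026binary}. Your explicit codeword for $d\le 2^{s-\alpha+1}$ is a correct, self-contained substitute, but the case split should be stated cleanly. If $E$ contains a point outside $B\cup\bigcup_j\{a_j,b_j\}$, or contains both points of some pair, then no member contains $E$. Otherwise $E$ meets at most $s-\alpha$ of the $s-\alpha+1$ pairs, so it lies in exactly $2^{e}$ members, where $e\ge 1$ is the number of pairs disjoint from $E$.

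The genuine gap is the lower bound $d\ge s-\alpha+2$. Your key claim is that a member $F\neq F_1$ contains at most one of the sets $A\cup\{y\}$ with $y\in F_1\setminus A$. This is false once $\alpha\ge 2$. Containing two of them only forces $|F\cap F_1|\ge s-\alpha+1$, and that is compatible with $F\neq F_1$ whenever $s-\alpha+1\le s-1$. Your own upper-bound codeword is a counterexample: two of its members differing in one pair share $s-1$ elements. Moreover, a single $A$ can produce at most $\alpha+1$ further members, so this route gives at best $d\ge\alpha+2$. Even for $\alpha=1$, where your claim is true, it gives only $d\ge 3$ instead of $s+1$, and ``optimizing $A$'' is not an argument.

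A short hitting-set argument closes the gap. Let $\supp(c)=\{F_1,\dots,F_N\}$ and suppose $N\le s-\alpha+1$.
\begin{enumerate}
\item For each $j\ge 2$ pick $y_j\in F_1\setminus F_j$.
\item The set $Y=\{y_2,\dots,y_N\}$ has at most $s-\alpha$ elements, so there is an $(s-\alpha)$-set $E$ with $Y\subseteq E\subseteq F_1$.
\item Then $E\subseteq F_1$ but $E\not\subseteq F_j$ for every $j\ge 2$.
\item So the check indexed by $E$ meets $\supp(c)$ in exactly one coordinate, contradicting $H_{m,s,\alpha}c^T=0$.
\end{enumerate}
Hence every nonzero codeword has weight at least $s-\alpha+2$, with no condition on $m$. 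This confirms your guess that $m\ge 2s-\alpha+1$ is needed only for the upper bound, where it also guarantees that the code is nonzero.
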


\begin{proof}
We first determine the locality. Fix a coordinate indexed by an $s$-subset $F\subseteq[m]$. Any row indexed by an $(s-\alpha)$-subset $E\subseteq F$ gives a parity-check whose support is a recovery set for the coordinate $F$. The support of this row is the set of all $s$-subsets of $[m]$ containing $E$, so it has size $\binom{m-s+\alpha}{\alpha}$. 
Thus, $r=\binom{m-s+\alpha}{\alpha}-1$. The availability is the number of rows containing the coordinate $F$, namely the number of $(s-\alpha)$-subsets of $F$. Hence $t=\binom{s}{s-\alpha}=\binom{s}{\alpha}$. Next, we compute the intersection parameter. Consider two distinct rows indexed by $(s-\alpha)$-subsets $E,E'\subseteq[m]$. A column indexed by an $s$-subset $F$ contains $1$ in both rows if and only if $E\cup E'\subseteq F$. Hence the number of common $1$-positions is the number of $s$-subsets containing $E\cup E'$, namely
\begin{align*}
\binom{m-|E\cup E'|}{s-|E\cup E'|}.
\end{align*}
This quantity is maximized when $|E\cup E'|$ is minimized. Since $E$ and $E'$ are distinct $(s-\alpha)$-subsets, the minimum is $s-\alpha+1$, attained when they differ in exactly one element. Substituting this value gives $x+1=\binom{m-s+\alpha-1}{\alpha-1}$. Finally, the bounds on the minimum distance follow from~\cite[Corollary~1]{marin2026binary}, and the dimension formula follows from~\cite[Theorem~1]{wilson1990diagonal}.
\end{proof}

The following theorem characterizes exactly when $\cC_{m,s,\alpha}$ is dual-containing.
\begin{theorem}\label{thm:dual_cont}
Let $m,s,\alpha$ be integers with $\alpha<s<m$. Then $\cC_{m,s,\alpha}$ is dual-containing iff 
\begin{align*}
\binom{m-u}{s-u}\equiv 0 \pmod 2
\end{align*}
for every $u\in\{s-\alpha,s-\alpha+1,\dots,\min\{2(s-\alpha),m\}\}$.
\end{theorem}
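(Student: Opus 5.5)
Since $\cC_{m,s,\alpha}$ is defined as the kernel of $H:=H_{m,s,\alpha}$, its dual $\cC_{m,s,\alpha}^\perp$ is the row space of $H$. The plan is therefore to use the standard criterion: $\cC^\perp\subseteq\cC$ holds if and only if every row of $H$ lies in $\cC$. Equivalently, every pair of rows is orthogonal, so that
\begin{align*}
H H^{\mathsf T}=0 \quad\text{over } \mathbb F_2 .
\end{align*}
This turns dual-containment into a purely combinatorial statement about the Gram matrix of $H$, and the proof will consist of computing that matrix entrywise.

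We compute the entry of $HH^{\mathsf T}$ indexed by two rows $E,E'$, which are $(s-\alpha)$-subsets of $[m]$. The entry counts the columns $F$ with $E\cup E'\subseteq F$, exactly as in the intersection computation in the proof of Theorem~\ref{thm:parameters}. Writing $u=|E\cup E'|$, the count is $\binom{m-u}{s-u}$. This covers the diagonal as well: taking $E=E'$ gives $u=s-\alpha$.

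Next we determine which values of $u$ actually occur. Since $|E|=|E'|=s-\alpha$, we have
\begin{align*}
s-\alpha\le u\le \min\{2(s-\alpha),m\},
\end{align*}
and every integer in this range is realized by some pair $(E,E')$ inside $[m]$: choose $|E\cap E'|=2(s-\alpha)-u$, which is feasible exactly when $u\le m$ and $u\le 2(s-\alpha)$. For $u>s$ the binomial $\binom{m-u}{s-u}$ vanishes by the paper's convention, which is consistent with the count being $0$. Hence $HH^{\mathsf T}=0$ if and only if $\binom{m-u}{s-u}$ is even for every $u$ in the stated range, which proves both directions of Theorem~\ref{thm:dual_cont}.

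The only delicate step is the realizability of every $u$ in the range. Without it we would obtain only the ``if'' direction; with it, each individual parity condition is actually forced. The explicit construction of $(E,E')$ settles this. The equivalence between dual-containment and $HH^{\mathsf T}=0$ is immediate once we note that $\cC^\perp=\mathrm{rowspace}(H)$, so I expect no further obstacle.
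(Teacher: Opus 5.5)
Your proposal is correct and follows essentially the same route as the paper's proof: reduce dual-containment to $H_{m,s,\alpha}H_{m,s,\alpha}^{\mathsf T}=0$ over $\mathbb F_2$, identify the $(E,E')$ entry with $\binom{m-u}{s-u}$ where $u=|E\cup E'|$, and use $|E\cap E'|=2(s-\alpha)-u$ to show every $u$ in the range occurs, which gives the converse. Your extra remark that $\binom{m-u}{s-u}$ vanishes for $u>s$ under the paper's convention, consistent with the combinatorial count, is a small but correct addition.
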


\begin{proof}
The code $\cC_{m,s,\alpha}$ has parity-check matrix $H_{m,s,\alpha}$, so $\cC_{m,s,\alpha}$ is dual-containing iff $H_{m,s,\alpha}H_{m,s,\alpha}^T=0 \pmod 2$. Let $h_E$ and $h_{E'}$ be two rows indexed by $(s-\alpha)$-subsets $E,E'\subseteq[m]$. A column indexed by an $s$-subset $S\subseteq[m]$ contributes $1$ to $h_E h_{E'}^T$ iff $E\subseteq S$ and $E'\subseteq S$, equivalently $E\cup E'\subseteq S$. Hence $h_E h_{E'}^T$ equals, modulo $2$, the number of $s$-subsets of $[m]$ containing $E\cup E'$. Writing $u:=|E\cup E'|$, we obtain
\begin{align*}
h_E h_{E'}^T \equiv \binom{m-u}{s-u}\pmod 2.
\end{align*}
Since $|E|=|E'|=s-\alpha$, the possible union sizes satisfy $s-\alpha\le u\le \min\{2(s-\alpha),m\}$. Conversely, every integer in this range occurs as $|E\cup E'|$ for some pair of $(s-\alpha)$-subsets $E,E'\subseteq[m]$, by choosing $|E\cap E'|=2(s-\alpha)-u$. Therefore, $H_{m,s,\alpha}H_{m,s,\alpha}^T=0 \pmod 2$ iff the stated congruence holds for all such $u$.
\end{proof}
\begin{remark}
If $s\ge 2\alpha$, then $u=s$ belongs to the range in Theorem~\ref{thm:dual_cont}, and the corresponding condition becomes $\binom{m-s}{0}\equiv 0 \pmod 2$, which is impossible. Hence $s<2\alpha$ is necessary for dual-containment.
\end{remark}

A related construction of binary $(r,t,x)$-cLRCs from subspace-inclusion matrices was given in~\cite{teng2021constructions}. However, the parity-check matrices have odd row weights and hence are not self-orthogonal, so they cannot be used in the dual-containing CSS
construction.

In the next subsection, we apply Theorem~\ref{thm:dual_cont} to the generalized WZL family to obtain CSS $(r,t,x)$-qLRCs.

\subsection{CSS codes from the generalized WZL construction}

We now apply the dual-containing generalized WZL construction within the CSS framework. The following corollary gives the resulting \rtxq parameters.

\begin{corollary}\label{cor:wzl_css}
Let $m,s,\alpha$ be integers with $\alpha<s<m$ satisfying the conditions of Theorem~\ref{thm:dual_cont}. Then $\mathcal Q_{m,s,\alpha}=\mathrm{CSS}(\cC_{m,s,\alpha})$ is an $(r,t,x)$-qLRC, where $r,t,x$ are as in Theorem~\ref{thm:parameters}. Moreover, if $m\ge 2s-\alpha$, then 
\begin{align*}
\kappa
=
\binom{m}{s}
-
2\sum_{\substack{0\le i\le s-\alpha\\
\binom{s-i}{\alpha}\not\equiv0\!\!\!\pmod2}}
\left(\binom{m}{i}-\binom{m}{i-1}\right),
\end{align*}
and if $m\ge 2s-\alpha+1$, then $\delta \ge d(\cC_{m,s,\alpha}) \ge s-\alpha+2$.
\end{corollary}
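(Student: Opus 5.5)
The plan is to chain Theorem~\ref{thm:dual_cont}, Theorem~\ref{thm:qrtx}, Theorem~\ref{thm:parameters} and Corollary~\ref{cor:css_selforth}. Under the stated conditions, Theorem~\ref{thm:dual_cont} gives $\cC_{m,s,\alpha}^\perp\subseteq\cC_{m,s,\alpha}$. The pair $\cC_1=\cC_2=\cC_{m,s,\alpha}$ therefore satisfies $\cC_1^\perp\subseteq\cC_2$, and $\mathcal Q_{m,s,\alpha}=\mathrm{CSS}(\cC_{m,s,\alpha})$ is well defined. To invoke Theorem~\ref{thm:qrtx} with $\cC_1=\cC_2$, two things are needed: the hypothesis $d(\cC_{m,s,\alpha}^\perp)\ge 2$, and common recovery sets. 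Common recovery sets are automatic, because the two codes coincide. We take the family $\{R_F^{(E)}\}_{E\subseteq F,\,|E|=s-\alpha}$ from the proof of Theorem~\ref{thm:parameters}, where $R_F^{(E)}$ is the support of row $E$. These are $t=\binom{s}{\alpha}$ distinct sets, since distinct $E$ give distinct row supports. Each has size $r+1$, and pairwise intersections have size at most $x+1$.

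The one step needing an argument, which I expect to be the only real obstacle, is $d(\cC_{m,s,\alpha}^\perp)\ge 2$. Equivalently, no codeword of $\cC^\perp=\mathrm{rowspace}(H_{m,s,\alpha})$ has weight one. Dual-containment does not settle this directly. It gives $\cC^\perp\subseteq\cC$, but a weight-one vector in $\cC^\perp$ would also be a weight-one vector in $\cC$, so it suffices to show $d(\cC)\ge 2$. This holds because every column of $H_{m,s,\alpha}$ is nonzero: each $s$-subset $F$ contains at least one $(s-\alpha)$-subset, as $\alpha<s$. Hence $\cC$ has no weight-one codeword, and the same is true of $\cC^\perp\subseteq\cC$.

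A more intrinsic alternative is to use the locality itself. If $e_F\in\cC^\perp\subseteq\cC$, then $e_F$ is a codeword with $c_F\neq0$ that vanishes on $R_F^{(E)}\setminus\{F\}$. This contradicts the classical recovery condition $\sigma_{\{F\}}(\pi_{R_F^{(E)}}(\cC))=\{0\}$ from Theorem~\ref{thm:parameters}. Either way, Theorem~\ref{thm:qrtx} then shows that $\mathcal Q_{m,s,\alpha}$ is an $(r,t,x)$-qLRC with the parameters of Theorem~\ref{thm:parameters}.

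For the parameters, Corollary~\ref{cor:css_selforth} gives $\kappa=2k-n$ and $\delta=\wt(\cC\setminus\cC^\perp)\ge d$. When $m\ge 2s-\alpha$, we substitute $n=\binom{m}{s}$ and the dimension formula from Theorem~\ref{thm:parameters}:
\begin{align*}
\kappa = 2\binom{m}{s}-2\sum_{i}\Bigl(\binom{m}{i}-\binom{m}{i-1}\Bigr)-\binom{m}{s},
\end{align*}
where the sum runs over the same index set $0\le i\le s-\alpha$ with $\binom{s-i}{\alpha}\not\equiv 0\pmod 2$. This simplifies to the stated expression. When $m\ge 2s-\alpha+1$, the bound $d\ge s-\alpha+2$ in Theorem~\ref{thm:parameters} combines with $\delta\ge d$ to give $\delta\ge d(\cC_{m,s,\alpha})\ge s-\alpha+2$.
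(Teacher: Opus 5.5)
Your proposal is correct and matches the paper's proof: you get dual-containment from Theorem~\ref{thm:dual_cont}, and $d(\cC_{m,s,\alpha}^\perp)\ge 2$ because every column of $H_{m,s,\alpha}$ is nonzero and $\cC_{m,s,\alpha}^\perp\subseteq\cC_{m,s,\alpha}$. Theorem~\ref{thm:qrtx} then applies with the identical, hence common, recovery sets. Your explicit substitution $\kappa=2k-n$ together with $\delta\ge d$ just spells out the parameter step that the paper leaves implicit.
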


\begin{proof}
By Theorem~\ref{thm:dual_cont}, $\cC_{m,s,\alpha}$ is dual-containing. Since each column of $H_{m,s,\alpha}$ has weight $\binom{s}{\alpha}>0$, $\cC_{m,s,\alpha}$ has no weight-one codeword, so $d(\cC_{m,s,\alpha})\ge 2$. As $\cC_{m,s,\alpha}^\perp\subseteq\cC_{m,s,\alpha}$, it follows that $d(\cC_{m,s,\alpha}^\perp)\ge 2$. By Theorem~\ref{thm:parameters}, $\cC_{m,s,\alpha}$ is an $(r,t,x)$-cLRC with the stated parameters, so Theorem~\ref{thm:qrtx} implies that $\mathcal Q_{m,s,\alpha}$ is an $(r,t,x)$-qLRC with the same recovery sets.
\end{proof}

Table~\ref{tab:best} lists representative choices of $(m,s,\alpha)$ satisfying Theorem~\ref{thm:dual_cont}, together with the resulting classical and CSS quantum parameters. The examples show that the generalized WZL construction gives binary \rtxqs with high rates and nontrivial guaranteed distances. The listed distances, or lower bounds, follow from~\cite[Propositions~3 and~4, Theorems~4--6]{marin2026binary}, in particular, the cases $(10,3,2)$, $(12,6,4)$, and $(14,7,4)$ attain the upper bound $d\le 2^{s-\alpha+1}$.

\begin{table}[t]
\centering
\caption{Parameters from the generalized WZL CSS construction.}\setlength{\tabcolsep}{3.2pt}
\scriptsize
\label{tab:best}
\renewcommand{\arraystretch}{1.2}
\begin{tabular}{|c|c|c|c|c|}
\hline
$\boldsymbol{(m,s,\alpha)}$ & $\boldsymbol{(r,t,x)}$ & $\boldsymbol{[n,k,d]}$ & $\boldsymbol{[[n,\kappa,\delta]]}$ & $\boldsymbol{R}$ \\
\hline
$(7,4,3)$     & $(19,4,9)$      & $[35,29,3]$        & $[[35,23,\ge 3]]$      & 0.66 \\
$(10,3,2)$    & $(35,3,7)$      & $[120,110,4]$      & $[[120,100,\ge 4]]$    & 0.83 \\
$(12,5,3)$    & $(119,10,35)$   & $[792,738,6]$      & $[[792,684,\ge 6]]$    & 0.86 \\
$(12,8,6)$    & $(209,28,125)$  & $[495,430,\ge 7]$  & $[[495,365,\ge 7]]$    & 0.74 \\
$(12,6,4)$    & $(209,15,83)$   & $[924,858,8]$      & $[[924,792,\ge 8]]$    & 0.86 \\
$(14,8,5)$    & $(461,56,209)$  & $[3003,2717,9]$    & $[[3003,2431,\ge 9]]$  & 0.81 \\
 $(14,10,7)$   & $(329,120,209)$ & $[1001,728,11]$    & $[[1001,455,\ge 11]]$  & 0.45 \\
$(14,9,6)$    & $(461,84,251)$  & $[2002,1652,\ge 14]$ & $[[2002,1302,\ge 14]]$ & 0.65 \\
$(14,7,4)$    & $(329,35,119)$  & $[3432,3068,16]$   & $[[3432,2704,\ge 16]]$ & 0.79 \\
\hline
\end{tabular} \vspace{-0.25cm}
\end{table}

\section{Bounds for CSS $(r,t,x)$-qLRCs} \label{sec:upper_bounds}

In this section, we derive dimension and rate bounds for dual-containing CSS \rtxqs from known bounds for classical \rtxcs. We also obtain distance bounds for pure CSS codes. Finally, we derive a Singleton-like dimension bound for exact dual-containing CSS \rtxqs that depends on the underlying classical minimum distance.

\subsection{General CSS dimension, rate, and distance bounds}

Following the approach of~\cite{galindo2026quantum}, we derive bounds for dual-containing CSS \rtxqs from bounds for the underlying classical \rtxcs. Let $d^{(r,t,x)}_{q,\mathrm{opt}}(n,k)$ and $k^{(r,t,x)}_{q,\mathrm{opt}}(n,d)$ denote, respectively, the maximum possible minimum distance and maximum possible dimension among $q$-ary linear \rtxcs with the specified parameters.

 \noindent\smash{\rule{0pt}{0.4pt}}\par\vspace*{-\baselineskip}
 \vspace*{0.4pt}
\begin{theorem}\label{thm:css_bounds}
Let $\cC$ be an $[n,k,d]_q$ code such that $d(\cC^\perp)\ge 2$, $\cC^\perp\subseteq \cC$, and $\cC$ is an $(r,t,x)$-cLRC. Let $\mathcal Q=\mathrm{CSS}(\cC)$ have parameters $[[n,\kappa,\delta]]_q$, with $\delta\ge d$. Then
\begin{align*}
\kappa &\le 2k^{(r,t,x)}_{q,\mathrm{opt}}(n,d)-n, \\
d &\le d^{(r,t,x)}_{q,\mathrm{opt}}\Bigl(n,\frac{n+\kappa}{2}\Bigr).
\end{align*}
\end{theorem}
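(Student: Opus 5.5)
The plan is to reduce both inequalities to the definitions of the optimal classical quantities, using only the CSS dimension relation from Corollary~\ref{cor:css_selforth}. Since $\cC$ is dual-containing, Corollary~\ref{cor:css_selforth} gives $\kappa=2k-n$, so $k=(n+\kappa)/2$; in particular $(n+\kappa)/2$ is an integer, and the second bound is well defined.

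For the first inequality, note that $\cC$ is itself a $q$-ary linear $(r,t,x)$-cLRC of length $n$, dimension $k$ and minimum distance $d$. It is therefore a feasible candidate in the maximization defining $k^{(r,t,x)}_{q,\mathrm{opt}}(n,d)$, so $k\le k^{(r,t,x)}_{q,\mathrm{opt}}(n,d)$. Substituting into $\kappa=2k-n$ gives $\kappa\le 2k^{(r,t,x)}_{q,\mathrm{opt}}(n,d)-n$.

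For the second inequality, the same code $\cC$ is a feasible candidate in the maximization defining $d^{(r,t,x)}_{q,\mathrm{opt}}(n,k)$. Hence $d\le d^{(r,t,x)}_{q,\mathrm{opt}}(n,k)=d^{(r,t,x)}_{q,\mathrm{opt}}\bigl(n,\frac{n+\kappa}{2}\bigr)$.

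The only subtlety is the convention in the definitions. If $k^{(r,t,x)}_{q,\mathrm{opt}}(n,d)$ is read as the maximum over codes with distance \emph{at least} $d$, the argument is unchanged. If it is read as distance exactly $d$, the code $\cC$ still qualifies, since $d$ is its exact minimum distance. The hypotheses $d(\cC^\perp)\ge 2$ and $\delta\ge d$ are not needed for these inequalities. They only connect the result to Theorem~\ref{thm:qrtx}, guaranteeing that $\mathcal Q$ is indeed an $(r,t,x)$-qLRC with the same recovery sets, so that the bound genuinely constrains CSS $(r,t,x)$-qLRCs. I do not expect a real obstacle; the main care point is stating the integrality of $(n+\kappa)/2$ explicitly.
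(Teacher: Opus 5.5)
Your proposal is correct and follows the paper's proof essentially verbatim: use $\kappa=2k-n$ from Corollary~\ref{cor:css_selforth}, then observe that $\cC$ is itself a feasible code in the maximizations defining $k^{(r,t,x)}_{q,\mathrm{opt}}(n,d)$ and $d^{(r,t,x)}_{q,\mathrm{opt}}(n,k)$ with $k=(n+\kappa)/2$. Your side remarks are also accurate: $(n+\kappa)/2$ is an integer, and the hypotheses $d(\cC^\perp)\ge 2$ and $\delta\ge d$ are not used in the inequalities themselves.
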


\begin{proof}
By Corollary~\ref{cor:css_selforth}, $\kappa=2k-n$. Since $\cC$ is a $q$-ary linear $(r,t,x)$-cLRC of length $n$, dimension $k$, and minimum distance $d$, the definition of $k^{(r,t,x)}_{q,\mathrm{opt}}(n,d)$ gives $k\le k^{(r,t,x)}_{q,\mathrm{opt}}(n,d)$. Hence $\kappa\le 2k^{(r,t,x)}_{q,\mathrm{opt}}(n,d)-n$. Similarly, since $k=(n+\kappa)/2$, the definition of $d^{(r,t,x)}_{q,\mathrm{opt}}(n,k)$ gives the second bound.
\end{proof}

\begin{remark}
Puncturing-based arguments for single-recovery-set qLRCs~\cite{li2025improved,luo2025bounds,li2025optimal} do not directly extend to \rtxqs. Although locality and bounded intersections may be preserved, availability need not be: distinct recovery sets can coincide after puncturing.
\end{remark}


Next, we recall two known bounds for \rtxcs. Every $(r,t,x)$-cLRC with parameters $[n,k,d]_q$ satisfies the alphabet-dependent distance bound~\cite{kruglik2018distance}, 
\begin{align*}
d \le \min_{1\le i \le k-r} \frac{q^i-q^{i-1}}{q^i - 1} \Big( n-(k-i)-\Big\lfloor  \frac{k-1-i}{r-1}\Big\rfloor\Big),
\end{align*}
and the dimension bound~\cite{kruglik2017one}, $k \le n(1-p(r,t,x))$, where
$s_L=\min\{L,\lfloor r/x\rfloor+1\}$ and
\begin{align*}
p(r,t,x) &= \sum_{\substack{1\le L\le t\\ L\text{ odd}}} \frac{\binom{t}{L}}{Lr+1}- \sum_{\substack{1\le L\le t\\ L\text{ even}}} \frac{2\binom{t}{L}}{2+s_L(2r-(s_L-1)x)}.
\end{align*}
This yields the following explicit CSS bounds.\vspace{-0.1cm}
\begin{corollary}\label{cor:css_boundss}
Let $\cC$ be as in Theorem~\ref{thm:css_bounds}, and let $\mathcal Q=\mathrm{CSS}(\cC)$ have parameters $[[n,\kappa,\delta]]_q$ with $\delta \ge d$. Then
\begin{align*}
\kappa &\le n(1-2p(r,t,x)),\\
d &\le \min_{1\le i \le \frac{n+\kappa}{2}-r} \frac{q^i-q^{i-1}}{2(q^i - 1)} \Big( n-(\kappa-2i) - 2\Big\lfloor \frac{\frac{n+\kappa}{2}-1-i}{r-1} \Big\rfloor \Big).
\end{align*}
\end{corollary}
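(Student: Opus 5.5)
The approach is to substitute $k=(n+\kappa)/2$ into the two classical bounds recalled just before the statement. By Corollary~\ref{cor:css_selforth}, $\kappa=2k-n$. The hypotheses of Theorem~\ref{thm:css_bounds} make $\cC$ an $[n,k,d]_q$ linear $(r,t,x)$-cLRC, so both classical bounds apply to $\cC$ directly. Going through $k^{(r,t,x)}_{q,\mathrm{opt}}$ or $d^{(r,t,x)}_{q,\mathrm{opt}}$ is unnecessary, although one could equivalently argue via Theorem~\ref{thm:css_bounds}.

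\emph{Dimension bound.} The bound of~\cite{kruglik2017one} gives $k\le n(1-p(r,t,x))$. Then
\begin{align*}
\kappa=2k-n\le 2n(1-p(r,t,x))-n=n(1-2p(r,t,x)),
\end{align*}
which is the first claim.

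\emph{Distance bound.} Apply the alphabet-dependent bound of~\cite{kruglik2018distance} to $\cC$ with $k=(n+\kappa)/2$. The index range $1\le i\le k-r$ becomes $1\le i\le \frac{n+\kappa}{2}-r$. The linear term becomes
\begin{align*}
n-(k-i)=\tfrac{1}{2}\bigl(n-(\kappa-2i)\bigr).
\end{align*}
The floor term is $\lfloor (k-1-i)/(r-1)\rfloor$ with $k$ replaced by $\frac{n+\kappa}{2}$. Factoring $\tfrac12$ out of the whole bracket doubles the floor term. This gives
\begin{align*}
n-(k-i)-\Big\lfloor\tfrac{k-1-i}{r-1}\Big\rfloor=\tfrac12\Big(n-(\kappa-2i)-2\Big\lfloor\tfrac{\frac{n+\kappa}{2}-1-i}{r-1}\Big\rfloor\Big).
\end{align*}
Multiplying by $\frac{q^i-q^{i-1}}{q^i-1}$ and taking the minimum over $i$ yields exactly the stated expression.

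The only points needing care are bookkeeping. First, the bound of~\cite{kruglik2018distance} implicitly needs $r\ge2$ for the floor to make sense; I would note this as an inherited assumption of that classical bound. Second, $\frac{n+\kappa}{2}=k$ is an integer, so the floor and the index range are well defined. Neither step is a genuine obstacle; the proof is a direct substitution.
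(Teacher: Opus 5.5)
Your proof is correct and is essentially the paper's argument. The paper applies Theorem~\ref{thm:css_bounds}, which is itself just the substitution $\kappa=2k-n$, to the two recalled classical bounds. You substitute $k=\frac{n+\kappa}{2}$ into those bounds directly, and your algebra for both inequalities checks out.
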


\begin{proof}
Apply Theorem~\ref{thm:css_bounds} to the classical bounds.\vspace{-0.1cm}
\end{proof}
The first bound in Corollary~\ref{cor:css_boundss} gives the rate bound $R:= \frac{\kappa}{n}\le 1-2p(r,t,x)$.

\begin{remark}
The distance bounds above are stated for the underlying classical distance $d$. If $\mathcal Q=\mathrm{CSS}(\cC)$ is pure, then $\delta=d$, and the same bounds apply to the quantum distance.
\end{remark}



\subsection{Dimension bound for exact CSS codes}

We now specialize to exact $(r,t,x)$-cLRCs. We first derive a classical analogue of the Singleton-like dimension bound in~\cite[Theorem~15]{bu2025quantum}, and then translate it to dual-containing CSS codes. Recall the definition of exact $(r,t,x)$-cLRCs/qLRCs.

\begin{definition}[\emph{Exact} $(r,t,x)$-qLRC/cLRC \cite{bu2025quantum}]
\label{def:exactc}
An $(r,t,x)$-qLRC/cLRC is called \emph{exact} if its recovery sets additionally satisfy $|R_i^{(j)}|=r+1$ for all $i\in[n]$ and $j\in[t]$, $|R_i^{(j)}\cap R_i^{(\ell)}|=x+1$ for all $i\in[n]$ and all distinct $j,\ell\in[t]$, and for all $i\in[n]$
\begin{align*}
R_i^{(j)}\cap R_i^{(\ell)}\cap R_i^{(f)}=\{i\}, \quad  \forall\:j,\ell,f\in[t]\ \text{pairwise distinct}.
\end{align*}
\end{definition}

The fixed recovery-set sizes and intersection pattern lead to a Singleton-like
dimension bound depending on the minimum distance. The following bound is
obtained by adapting the proof of~\cite[Theorem~15]{bu2025quantum} to exact
$(r,t,x)$-cLRCs.

\begin{proposition}\label{prop:dim_exact}
Let $\cC$ be a linear $[n,k,d]_q$ exact $(r,t,x)$-cLRC. Then \vspace{-0.15cm}
\begin{align*}
k\le n-(d-1)-\bar{N}\!\left(n,r,d,\left\lceil n\,p_e(r,t,x)\right\rceil\right),
\end{align*}
where 
\begin{align*} p_e(r,t,&x) =\sum_{L=1}^{t}(-1)^{L+1}\binom{t}{L} \frac{2}{2+L\bigl(2r-(L-1)x\bigr)}, \\ \bar{N}(n,r,d,M&) :=\max \Big \{ N\in\{0,\dots,M\} : N(r+1) \\ &-\frac{N(N-1)}{M(M-1)} \big (M(r+1)-n\big ) \le n-(d-1) \Big \}. \end{align*}
\end{proposition}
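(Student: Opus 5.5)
We follow the standard Gopalan-type strategy for Singleton-like bounds: construct a large set $T\subseteq[n]$ with $\operatorname{rank}(\cC|_T)\le |T|-N$, while keeping $|T|\le n-(d-1)$. For such a set, any $d-1$ coordinates outside $T$ can be appended. Since every set of $n-d+1$ coordinates has full rank $k$ (puncturing $d-1$ coordinates is injective), we obtain $k\le \operatorname{rank}(\pi_{T'}(\cC))\le |T'|-N=n-(d-1)-N$. The task therefore reduces to finding $N$ recovery sets $R_1,\dots,R_N$ whose union $T$ has size at most $n-(d-1)$, such that each $R_a$ contributes one independent linear relation on $T$. Concretely, each $R_a$ is the recovery set of a coordinate $i_a\in R_a\setminus\bigcup_{b<a}R_b$. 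The condition $\sigma_{\{i\}}(\pi_{R}(\cC))=\{0\}$ says that coordinate $i_a$ is determined linearly by $R_a\setminus\{i_a\}$, so adding $R_a$ raises the rank by at most $|R_a\setminus\bigcup_{b<a}R_b|-1$.

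The first key step is a counting argument producing a family of $M=\lceil n\,p_e(r,t,x)\rceil$ recovery sets with the property that every coordinate lies in some chosen set that is ``owned'' by a distinct coordinate. Following the recovery-graph argument of \cite{kruglik2017one,bu2025quantum}, we pick for each $i$ all $t$ sets $R_i^{(j)}$ and greedily select a system of representatives. Alternatively, a probabilistic argument shows that choosing each coordinate independently and covering by its recovery sets leaves a fraction $p_e$ of "leaders". Exactness enters here: the sizes are exactly $r+1$, pairwise intersections have size exactly $x+1$, and triple intersections equal $\{i\}$. These facts give, by inclusion–exclusion, the exact size of a union of $L$ recovery sets through $i$, namely $1+L r-\binom{L}{2}x$, which matches the denominators $\frac{2+L(2r-(L-1)x)}{2}$ in $p_e$. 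We therefore expect to obtain $M$ sets $R_1,\dots,R_M$, each containing a private coordinate, with $\sum_a|R_a|=M(r+1)$ and union of size at most $n$.

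The second step is to select $N\le M$ of these sets with a small union. We choose a uniformly random $N$-subset of the $M$ sets. Each pair of sets overlaps in expectation; writing the total excess as $\sum_a|R_a|-|\bigcup_a R_a|\ge M(r+1)-n$ and attributing it to pairs via Bonferroni (the union is at least $\sum|R_a|-\sum_{a<b}|R_a\cap R_b|$), an averaging argument gives a choice of $N$ sets whose union has size at most $N(r+1)-\frac{N(N-1)}{M(M-1)}\bigl(M(r+1)-n\bigr)$. When this is at most $n-(d-1)$, $T$ is admissible. The rank deficiency is at least $N$ because each chosen set retains its private coordinate, which is determined by the other coordinates of its set. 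Maximizing over $N$ gives $\bar N$.

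The main obstacle is the averaging step. The bound needs the pairwise overlap total to be at least $M(r+1)-n$, which is the wrong direction for a union bound. We would try to handle this by counting multiplicities: $\sum_a|R_a|-n\le\sum_{v}(\deg v-1)\le\sum_{a<b}|R_a\cap R_b|$. A random $N$-subset keeps each pair with probability $\frac{N(N-1)}{M(M-1)}$, so the expected union is at most $\mathbb E[\sum_{a\in S}|R_a|]-\mathbb E[\text{pair overlaps}]$ plus higher-order corrections. Those corrections must be controlled, possibly again using the triple-intersection condition, as in \cite[Theorem~15]{bu2025quantum}. We would mirror that proof line by line, replacing quantum rank (entropy) arguments by classical rank.
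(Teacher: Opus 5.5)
Your architecture matches the paper's. You handle $d-1$ coordinates by the global minimum distance and $N$ further coordinates by recovery sets arranged so that each avoids the previously chosen private coordinates, which gives $k\le n-(d-1)-N$. Your rank count on $T$ (extended to some $T'\supseteq T$ with $|T'|=n-(d-1)$, which is what you actually use) is an equivalent phrasing of the paper's reverse-order erasure recovery of $U$. The paper, however, takes the existence of the ordered set $U$ with $|U|=(d-1)+\bar N$ directly from \cite[Lemma~14]{bu2025quantum}. You instead try to rebuild that lemma, and the rebuild has a genuine gap exactly where you suspect.

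Pairwise Bonferroni bounds the union from below, not above. Your proposed repair via the triple-intersection condition cannot work: that condition constrains only the $t$ sets through one coordinate, while your $M$ sets belong to $M$ different leaders, and a coordinate may lie in many of them. What is missing is a per-coordinate estimate involving $d_v-1$ rather than $\binom{d_v}{2}$. Suppose $v$ lies in $d_v\ge 1$ of the $M$ sets, and let $X_v$ be the number of these picked by a uniformly random $N$-subset $S$. Then $\mathbf{1}\{X_v\ge 1\}\le X_v-X_v(X_v-1)/d_v$ holds pointwise; for $1\le X_v\le d_v$ it is equivalent to $(X_v-1)(d_v-X_v)\ge 0$. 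Taking hypergeometric expectations, with $T_S:=\bigcup_{a\in S}R_a$, gives
\begin{align*}
\Pr[v\in T_S]\le d_v\frac{N}{M}-(d_v-1)\frac{N(N-1)}{M(M-1)}.
\end{align*}
Now sum over $v$, using $\sum_v d_v=M(r+1)$ (exact sizes) and $\sum_{v:\,d_v\ge 1}(d_v-1)=M(r+1)-\bigl|\bigcup_a R_a\bigr|\ge M(r+1)-n$. This gives precisely the expected-union bound in the definition of $\bar N$, with no higher-order corrections.

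Your first step must also be made precise. What you need is not that ``every coordinate lies in some chosen set'' but a nested avoidance property that survives passing to subfamilies. Take a uniformly random ordering of $[n]$, and call $i$ a leader if it precedes all other elements of at least one $R_i^{(j)}$. Inclusion--exclusion over nonempty $J\subseteq[t]$, together with your union sizes $1+Lr-\binom{L}{2}x$, gives $\Pr[i\text{ is a leader}]=p_e(r,t,x)$. Hence some ordering has at least $M=\lceil n\,p_e(r,t,x)\rceil$ leaders $i_1\prec\dots\prec i_M$ (keep any $M$ of them). Their chosen sets satisfy $R_a\cap\{i_1,\dots,i_{a-1}\}=\emptyset$. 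Listing them in reverse order gives your condition $i_a\notin\bigcup_{b<a}R_b$. Since this property is inherited by every subfamily, selecting $N$ of the $M$ sets in the second step keeps your rank-deficiency count valid.
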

\begin{proof}[Proof sketch]
Let $\{R_i^{(j)}\}_{i\in[n],\,j\in[t]}$ be the exact recovery sets of $\cC$.
By~\cite[Lemma~14]{bu2025quantum}, there exists a set
$U:=\{i_1,\dots,i_T\}\subseteq[n]$, where
$T=(d-1)+\bar{N}\!\left(n,r,d,\left\lceil n\,p_e(r,t,x)\right\rceil\right)$,
such that for every $d\le \ell\le T$, one can choose
$a_\ell\in[t]$ satisfying
$R_{i_\ell}^{(a_\ell)}\cap \{i_1,\dots,i_{\ell-1}\}=\emptyset$.
We recover locally the coordinates $i_T,i_{T-1},\dots,i_d$ in this reverse
order, since at each step the chosen recovery set is disjoint from the erased
coordinates that remain. The coordinates $\{i_1,\dots,i_{d-1}\}$ are then
recoverable globally because $\cC$ has minimum distance $d$. Thus the symbols in
$U$ are determined by the symbols in $[n]\setminus U$, so puncturing on
$[n]\setminus U$ is injective on $\cC$ and therefore $k\le n-|U|$.
\end{proof}


This gives the following Singleton-like dimension bound.

\begin{corollary}\label{cor:CSS_exact}
Let $\cC$ be as in Theorem~\ref{thm:css_bounds}. If $\cC$ is an exact $(r,t,x)$-cLRC, then $\mathcal Q=\mathrm{CSS}(\cC)$ is an exact $(r,t,x)$-qLRC with parameters $[[n,\kappa,\delta]]_q$, where $\delta\ge d$, and
\begin{align*}
\kappa \le n-2(d-1) -2\bar{N}\!\left(n,r,d,\left\lceil n\,p_e(r,t,x)\right\rceil\right).
\end{align*}
\end{corollary}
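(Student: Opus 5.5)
The proof combines three earlier results: Theorem~\ref{thm:qrtx} to transfer the recovery sets, Corollary~\ref{cor:css_selforth} for the dimension, and Proposition~\ref{prop:dim_exact} for the bound.

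\emph{Step 1: $\mathcal Q$ is exact.} Since $\cC^\perp\subseteq\cC$ and $d(\cC^\perp)\ge 2$, Theorem~\ref{thm:qrtx} applies with $\cC_1=\cC_2=\cC$. The common-recovery-set requirement holds trivially, because both classical codes equal $\cC$ and so share the same exact family $\{R_i^{(j)}\}$. Hence $\mathcal Q=\mathrm{CSS}(\cC)$ is an $(r,t,x)$-qLRC. Moreover, its recovery sets are exactly these classical sets: the equivalence in Lemma~\ref{lem:css_constr} is set-by-set, and $\sigma_{\{i\}}(\pi_{R}(\cC))=\{0\}=\sigma_{\{i\}}(\cC^\perp)$ for each classical recovery set $R$. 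Exactness is a purely combinatorial condition on the family. It requires sizes equal to $r+1$, pairwise intersections of size $x+1$, and triple intersections equal to $\{i\}$, and all three properties are inherited verbatim. So $\mathcal Q$ is an exact $(r,t,x)$-qLRC.

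\emph{Step 2: distance and dimension.} Corollary~\ref{cor:css_selforth} gives $\delta=\wt(\cC\setminus\cC^\perp)\ge d$ and $\kappa=2k-n$.

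\emph{Step 3: the bound.} Proposition~\ref{prop:dim_exact} applied to the exact cLRC $\cC$ gives $k\le n-(d-1)-\bar N$, where $\bar N=\bar N(n,r,d,\lceil n p_e(r,t,x)\rceil)$. Substituting into $\kappa=2k-n$ yields
\[
\kappa\le 2n-2(d-1)-2\bar N-n=n-2(d-1)-2\bar N,
\]
which is the claimed inequality.

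The only step needing care is Step 1. One must confirm that the qLRC recovery sets may be taken to be the same sets as the classical ones, rather than merely some family satisfying the conditions. This follows because Lemma~\ref{lem:css_constr} characterizes recoverability for each individual set $J=R_i^{(j)}$. Everything else is direct substitution.
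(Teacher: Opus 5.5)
Your proposal is correct and follows essentially the same route as the paper. The paper likewise uses Theorem~\ref{thm:qrtx} with $\cC_1=\cC_2=\cC$ to transfer the exact recovery sets verbatim, so $\mathcal Q$ is exact, and then substitutes $k=(n+\kappa)/2$ from Corollary~\ref{cor:css_selforth} into Proposition~\ref{prop:dim_exact}; your set-by-set remark about Lemma~\ref{lem:css_constr} just spells out what the paper means by ``the same recovery sets.''
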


\begin{proof}
By Theorem~\ref{thm:qrtx}, $\mathcal Q$ has the same recovery sets as $\cC$, and is therefore exact. The bound follows by applying Proposition~\ref{prop:dim_exact} with $k=(n+\kappa)/2$ and rearranging. 
\end{proof}


\section{Comparison of exact CSS \rtxqs} \label{sec:comp_exact}

We now examine an exact subfamily of the generalized WZL CSS construction. This subfamily allows a direct comparison with the exact CSS bound from Corollary~\ref{cor:CSS_exact}, the exact quantum bound of Bu--Gu--Li~\cite{bu2025quantum}, and their explicit exact construction. Note that this subfamily is not representative of the full generalized WZL CSS parameter range.

\begin{example}\label{ex:a2s3_family}
Let $(s,\alpha)=(3,2)$ and let $m=4\ell+2$ for some $\ell\ge 1$. Then $\cC_{m,3,2}$ is a dual-containing exact $(r,t,x)$-cLRC with locality $r=\binom{m-1}{2}-1$, availability $t=3$, intersection parameter $x=m-3$, and parameters $\big[\binom{m}{3},\,\binom{m}{3}-m,\,4\big]$. Hence $\mathcal Q_{m,3,2}=\mathrm{CSS}(\cC_{m,3,2})$ is an exact pure $(r,t,x)$-qLRC with parameters $\big[\big[\binom{m}{3},\,\binom{m}{3}-2m,\,4\big]\big]$.
\end{example}

For a coordinate indexed by $F=\{u,v,w\}$, the recovery sets induced by $\{u\}$, $\{v\}$, and $\{w\}$ have size $\binom{m-1}{2}$, any two intersect in $m-2$ coordinates, and all three intersect only in the coordinate indexed by $F$. Hence the family is exact.

It remains to justify purity. Since $\cC_{m,3,2}^{\perp}$ is the row span of $H_{m,3,2}$, every nonzero word is the sum of the rows indexed by a nonempty set $A\subseteq[m]$. If $|A|=a$, then a triple $F\subseteq[m]$ contributes precisely when $|A\cap F|$ is odd, so the resulting word has weight $\binom{a}{1}\binom{m-a}{2}+\binom{a}{3}$. For $m\ge 6$, this weight is larger than $4$ for every $1\le a\le m$, so $d(\cC_{m,3,2}^{\perp})>4$. Hence the CSS code is pure and has distance $\delta=4$.

Table~\ref{tab:a2s3_family_params} compares the exact $(3,2)$-family with the bounds derived above. The achieved dimensions are close to both the general exact bound $\kappa_{\mathrm e}^*$ and the CSS-specific bound $\tilde{\kappa}_{\mathrm e}^*$, which coincide from $m=14$ onward. The achieved rate approaches the rate bound $R^*$ as $m$ grows. Since the family is pure, the distance bound from Corollary~\ref{cor:css_boundss} applies to its quantum distance. However, the family has fixed distance $\delta=4$, while the upper bound $\delta^*$ grows with $m$.

\begin{table}[t]
\centering
\setlength{\tabcolsep}{1.5pt}
\caption{Parameters of the exact $(3,2)$-family. Here $\kappa_{\mathrm e}^*$ is the general exact bound~\cite[Thm.~15]{bu2025quantum}, $\tilde{\kappa}_{\mathrm e}^*$ is the CSS exact bound (Cor.~\ref{cor:CSS_exact}), and $R^*$ and $\delta^*$ are the rate and distance bounds from Cor.~\ref{cor:css_boundss}.}
\label{tab:a2s3_family_params}
\scriptsize
\renewcommand{\arraystretch}{1.15}
\begin{tabular}{|c|c|c|c|c|c|c|c|c|}
\hline
$\boldsymbol{m}$ & $\boldsymbol{(r,t,x)}$ & $\boldsymbol{[n,k,d]}$ & $\boldsymbol{[[n,\kappa,\delta]]}$ & $\boldsymbol{R}$ & $\boldsymbol{\kappa^*_{\mathrm{e}}}$ & $\boldsymbol{\tilde{\kappa}_{\mathrm{e}}^{*}}$ & $\boldsymbol{R^*}$ & $\boldsymbol{\delta^*}$ \\
\hline
$6$   & $(9,3,3)$      & $[20,14,4]$      & $[[20,8, 4]]$       & 0.400 & 11   & 10   & 0.704 & $4$ \\
$10$  & $(35,3,7)$     & $[120,110,4]$    & $[[120,100, 4]]$    & 0.833 & 105  & 104  & 0.908 & $5$ \\
$14$  & $(77,3,11)$    & $[364,350,4]$    & $[[364,336, 4]]$    & 0.923 & 344  & 344  & 0.956 & $7$ \\
$18$  & $(135,3,15)$   & $[816,798,4]$    & $[[816,780, 4]]$    & 0.956 & 790  & 790  & 0.974 & $9$ \\
$22$  & $(209,3,19)$   & $[1540,1518,4]$  & $[[1540,1496, 4]]$  & 0.972 & 1508 & 1508 & 0.983 & $10$ \\
$26$  & $(299,3,23)$   & $[2600,2574,4]$  & $[[2600,2548, 4]]$  & 0.980 & 2564 & 2564 & 0.988 & $11$ \\
\hline
\end{tabular}\vspace{-0.35cm}
\end{table}
We next compare the exact subfamily in Example~\ref{ex:a2s3_family} with the
explicit exact construction of Bu--Gu--Li~\cite{bu2025quantum}. Their
construction gives exact $(r,t,1)$-qLRCs parametrized by an integer $\mu$,
with locality $r=2^{\mu+1}-1$, availability $t=2^{\mu+1}$, length
$n=7^{\,2^{\mu-1}}2^{\,2^{\mu-1}-1}$, dimension at least
$2^{\,2^{\mu-1}-1}$, and distance at least $3$. It achieves the smallest
possible intersection parameter $x=1$ and large availability, but the resulting
rate lower bound is $7^{-2^{\mu-1}}$, which vanishes as $\mu$ grows. In
contrast, the exact subfamily in Example~\ref{ex:a2s3_family} has fixed
availability $t=3$, larger recovery set intersections, and distance
$\delta=4$, while its rate is
$R_m = 1-\frac{12}{(m-1)(m-2)}$, which tends to $1$. The two constructions
therefore exhibit different tradeoffs: smaller intersections and larger
availability in~\cite{bu2025quantum} versus higher rates and certified
minimum distance $4$ in the exact subfamily above.


\section{Conclusion}

We studied CSS $(r,t,x)$-qLRCs and showed that, under a dual
minimum-distance assumption, the $(r,t,x)$-qLRC property is equivalent to the
underlying classical codes being \rtxcs with common recovery sets. Using the
subset-inclusion construction, we then constructed explicit infinite families
of binary CSS $(r,t,x)$-qLRCs. We also derived dimension and rate bounds for
CSS $(r,t,x)$-qLRCs by translating known bounds for classical \rtxcs, obtained
minimum-distance bounds in the pure case, and proved a Singleton-like dimension
bound in the exact case. Future work includes extending these constructions
beyond the binary CSS setting and identifying parameter regimes in which
inclusion-matrix CSS codes yield useful quantum LDPC families.
\bibliographystyle{ieeetr}
\bibliography{ref}

\end{document}